\newtheorem{theorem}{Theorem}
\newtheorem{algorithm}[theorem]{Algorithm}
\newtheorem{proposition}[theorem]{Proposition}
\newenvironment{proof}{ \textbf{Proof:} }{ \hfill $\Box$}
\newcommand{\figref}[1]{{Fig.}~\ref{#1}}
\def\bb0{{\mathbb{0}}}
\def\bb{{\mathbf{b}}}
\def\bff{{\mathbf{f}}}
\def\bn{{\mathbf{n}}}
\def\bs{{\mathbf{s}}}
\def\by{{\mathbf{y}}}
\def\b0{{\mathbf{0}}}
\def\bA{{\mathbf{A}}}
\def\bE{{\mathbf{E}}}
\def\bF{{\mathbf{F}}}
\def\bH{{\mathbf{H}}}
\def\bI{{\mathbf{I}}}
\def\bP{{\mathbf{P}}}
\def\bT{{\mathbf{T}}}
\def\bU{{\mathbf{U}}}
\def\bV{{\mathbf{V}}}
\def\bW{{\mathbf{W}}}
\def\bbE{{\mathbb{E}}}
\def\cF{\mathcal{F}}
\def\cI{\mathcal{I}}
\def\cN{\mathcal{N}}
\def\cU{\mathcal{U}}
\def\sf0{{\mathsf{0}}}
\newcommand{\sref}[1]{{Section}~\ref{#1}}
\def\j{\mathrm{j}}
\begin{document}
\title{Gram Schmidt Based Greedy Hybrid Precoding for Frequency Selective Millimeter Wave MIMO Systems}
\name{Ahmed Alkhateeb and Robert W. Heath Jr. \thanks{This work is supported in part by the National Science Foundation under Grant No. 1319556, and by a gift from Nokia.}}
\address{The University of Texas at Austin, TX, USA, Email: $\{$aalkhateeb,  rheath$\}$@utexas.edu}
\maketitle
\ninept

\begin{abstract}
Hybrid analog/digital precoding allows millimeter wave MIMO systems to leverage large antenna array gains while permitting low cost and power consumption hardware. Most prior work has focused on hybrid precoding for narrow-band mmWave systems. MmWave systems, however, will likely operate on wideband channels with frequency selectivity. Therefore, this paper considers frequency selective hybrid precoding with RF beamforming vectors taken from a quantized codebook. For this system, a low-complexity yet near-optimal greedy algorithm is developed for the design of the hybrid analog/digital precoders. The proposed algorithm greedily selects the RF beamforming vectors using Gram-Schmidt orthogonalization. Simulation results show that the developed precoding design algorithm achieves very good performance compared with the unconstrained solutions while requiring less complexity.
\end{abstract}

\begin{keywords}
	Millimeter wave communications, frequency selective, Gram-Schmidt, hybrid precoding.
\end{keywords}

\section{Introduction} \label{sec:Intro}
Millimeter wave (mmWave) communication can leverage the large bandwidth potentially available at the high frequency bands to provide high data rates \cite{Nitsche2014,Pi2011,Rappaport2014,Boccardi2014,Bai2014}. To guarantee  sufficient received signal power  at these high frequencies, though, large antenna arrays need to be deployed at both the transmitter and receiver \cite{Pi2011,Rappaport2014,Rappaport2013a}. Designing precoding and combining matrices for these mmWave wideband large MIMO systems differs from lower-frequency solutions. This is mainly due to the different hardware constraints on the mixed signal components because of their high cost and power consumption \cite{Alkhateeb2014d}. Therefore, developing precoding schemes for wideband mmWave systems is important for building these systems.

For the sake of low power consumption, hybrid analog/digital precoding solutions, that divide the precoding between analog and digital domains, and hence requiring smaller number of RF chains, were proposed in \cite{Zhang2005,Venkateswaran2010,ElAyach2014,Alkhateeb2013,Alkhateeb2014,Sohrabi2015,Mendez-Rial2015a,Chen2015,Kim2013}.  For general MIMO systems, hybrid precoding design with diversity and spatial multiplexing objectives were investigated in \cite{Zhang2005,Venkateswaran2010}. In \cite{ElAyach2014}, the sparse nature of mmWave channels was exploited, and low-complexity iterative algorithms based on matching pursuit were devised, assuming perfect channel knowledge at the transmitter. Extensions to the case when only partial channel knowledge is required was considered in \cite{Alkhateeb2013,Alkhateeb2014}. Other heuristic algorithms that do not rely on orthogonal matching pursuit were also proposed in \cite{Sohrabi2015,Mendez-Rial2015a,Chen2015} for the hybrid precoding design with perfect channel knowledge at the transmitter. The work in \cite{ElAyach2014,Alkhateeb2013,Alkhateeb2014,Sohrabi2015,Mendez-Rial2015a} assumed a narrow-band mmWave channel, with perfect or partial channel knowledge at the transmitter. In \cite{Kim2013}, hybrid beamforming with only a single-stream transmission over MIMO-OFDM system was considered. The solution in \cite{Kim2013} though relied on the joint exhaustive search over both RF and baseband codebooks which results in high-complexity. As mmWave communication is expected to employ broadband channels, developing spatial multiplexing hybrid precoding algorithms for wideband mmWave systems is important. 

In this paper, we investigate the frequency selective hybrid precoding design to maximize the achievable mutual information given that the RF precoders are taken from a quantized codebook. We first derive the optimal baseband precoders as functions of the RF precoders. Then, we design a greedy hybrid precoding algorithm based on Gram-Schmidt orthogonalization. Despite its low-complexity, the proposed algorithm is illustrated to achieve a similar performance compared with the optimal hybrid precoding design that  requires an exhaustive search over the RF codebooks.
\section{System Model} \label{sec:Model}
Consider the OFDM based system model in \figref{fig:Model} where a basestation (BS) with $N_\mathrm{BS}$ antennas and $N_\mathrm{RF}$ RF chains is assumed to communicate with a single mobile station (MS) with $N_\mathrm{MS}$ antennas and $N_\mathrm{RF}$ RF chains. The BS and MS communicate via $N_\mathrm{S}$ length-$K$ data symbol blocks, such that $N_\mathrm{S} \leq N_\mathrm{RF} \leq N_\mathrm{BS}$ and $N_\mathrm{S}\leq N_\mathrm{RF} \leq N_\mathrm{MS}$. 

At the transmitter, the $N_\mathrm{S}$ data symbols $\bs_k$ at each subcarrier $k=1, ..., K$ are first precoded using an $N_\mathrm{RF} \times N_\mathrm{S}$ digital precoding matrix $\bF[k]$, and the symbol blocks are transformed to the time-domain using $N_\mathrm{RF}$ $K$-point IFFT's. Note that our model assumes that all subcarriers are used and, therefore, the data block length is equal to the number of subcarriers. A cyclic prefix of length $D$ is then added to the symbol blocks before applying the $N_\mathrm{BS} \times N_\mathrm{RF}$ RF precoding $\bF_\mathrm{RF}$. It is important to emphasize here that the RF precoding matrix $\bF_\mathrm{RF}$ is the same for \textit{all} subcarriers. This means that the RF precoder is assumed to be frequency flat while the baseband precoders can be different for each subcarrier. The discrete-time transmitted complex baseband signal at subcarrier $k$ can therefore be written as
\begin{equation}
\by[k]=\bF_\mathrm{RF} \bF[k] \bs[k],
\end{equation}
where $\bs[k]$ is the $N_\mathrm{S}\times 1$ transmitted vector at subcarrier $k$, such that  $\bbE\left[\bs[k]\bs[k]^*\right] = \frac{P}{K N_\mathrm{S}} \bI_{N_\mathrm{S}}$, and $P$ is the average total transmit power. Since $\bF_\mathrm{RF}$ is implemented using analog phase shifters, its entries are of constant modulus. To reflect that, we normalize the entries  $\left|\left[\bF_\mathrm{RF}\right]_{m,n}\right|^2=1$. Further, we assume that the angles of the analog phase shifters are quantized and have a finite set of possible values. With these assumptions, $\left[\bF_\mathrm{RF}\right]_{m,n}= e^{\j \phi_{m,n}}$, where $\phi_{m.n}$ is a quantized angle. The hybrid precoders are assumed to have a unitary power constraint,i.e., they meet $\bF_\mathrm{RF} \bF[k] \in \cU_{N_\mathrm{BS} \times N_\mathrm{S}}$, with the set of semi-unitary matrices $\cU_{N_\mathrm{BS} \times N_\mathrm{S}}=\left\{\bU \in \mathbb{C}^{N_\mathrm{BS} \times N_\mathrm{S}} | \bU^* \bU= \bI \right\}$.

\begin{figure}[t]
	\centerline{
		\includegraphics[scale=.25]{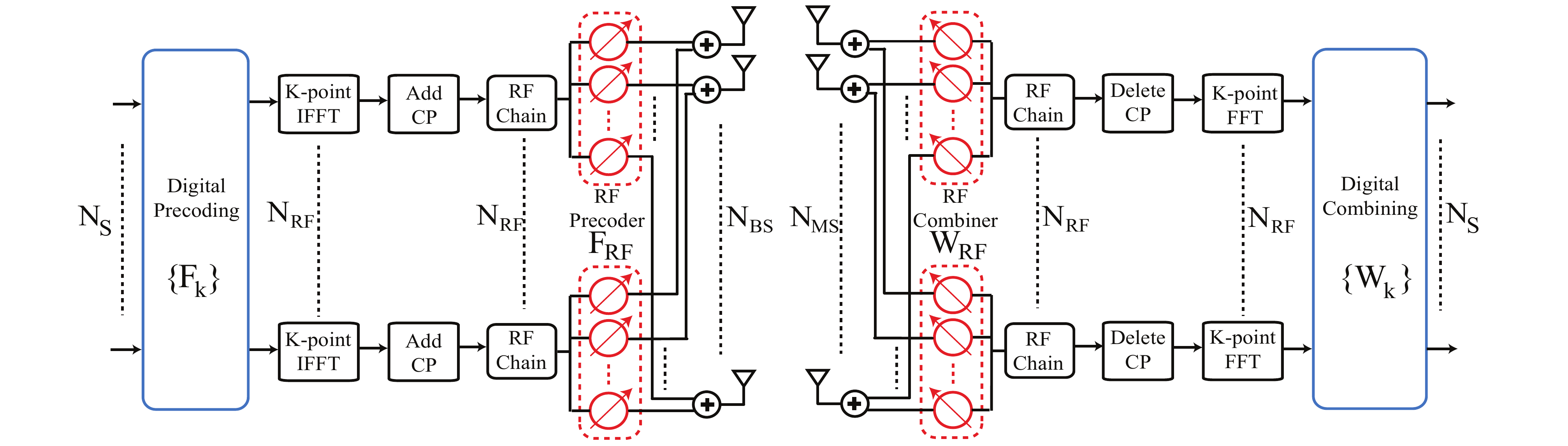}
	}
	\caption{A block diagram of the OFDM based BS-MS transceiver that employs hybrid analog/digital precoding.}
	\label{fig:Model}
\end{figure}

At the MS, assuming perfect carrier and frequency offset synchronization, the received signal is first combined in the RF domain using the $N_\mathrm{MS} \times N_\mathrm{RF}$ combining matrix $\bW_\mathrm{RF}$. Then, the cyclic prefix is removed, and the symbols are returned back to the frequency domain where the symbols at each subcarrier $k$ are combined using the $N_\mathrm{RF} \times N_\mathrm{S}$ digital combining matrix $\bW[k]$. Denoting the  $N_\mathrm{MS} \times N_\mathrm{BS}$ channel matrix at subcarrier $k$ as $\bH[k]$, the received signal at subcarrier $k$ after processing can be then expressed as
\begin{equation}
\by[k]=\bW[k]^* \bW_\mathrm{RF}^* \bH[k] \bF_\mathrm{RF} \bF[k] s[k]+ \bW[k]^* \bW_\mathrm{RF}^* \bn[k],
\label{eq:processed}
\end{equation}
where  $\bn[k] \sim \cN(\boldsymbol{0}, \sigma_\mathrm{N}^2 \bI)$ is a Gaussian noise vector.

\section{Problem Formulation} \label{sec:Form}
The paper objective is to develop a low-complexity hybrid precoding design to maximize the achievable system spectral efficiency. Given the system model in \sref{sec:Model}. For simplicity of exposition, we will assume that the receiver can perform optimal nearest neighbor decoding based on the $N_\mathrm{MS}$-dimensional received signal with fully digital hardware. This allows decoupling the transceiver design problem, and focusing on the hybrid precoders design to maximize the mutual information of the system \cite{ElAyach2014}, defined as 
\begin{align}
& \cI\left(\bF_\mathrm{RF}, \left\{\bF[k]\right\}_{k=1}^K\right)= \nonumber\\
& \frac{1}{K} \sum_{k=1}^{K} \log_2 \left|\bI_{N_\mathrm{MS}}+\frac{\rho}{N_\mathrm{S}} \bH[k] \bF_\mathrm{RF} \bF[k]  \bF[k]^* \bF_\mathrm{RF}^*  \bH[k]^*  \right|,
\label{eq:MI}
\end{align}
where $\rho=\frac{P}{K \sigma^2}$ is the SNR.  As combining with fully digital hardware is not a practical mmWave solution, the hybrid combining design problem needs also to be considered. The design ideas that will be given in this paper for the hybrid precoders, however, provide direct tools for constructing the hybrid combining matrices, $\bW_\mathrm{RF}$, $\left\{\bW[k]\right\}_{k=1}^K$, and is therefore omitted due to space limitations.

If the RF beamforming vectors are taken from a codebook $\cF_\mathrm{RF}$ that captures the RF hardware constraints, then the maximum mutual information under the given hybrid precoding model is
\begin{equation}
\begin{aligned}
\cI^{\star}_\mathrm{HP} =  & \underset{\bF_\mathrm{RF}, \left\{\bF[k]\right\}_{ k=1}^K}  \max
& &\cI\left(\bF_\mathrm{RF}, \left\{\bF[k]\right\}_{k=1}^K\right) \\
& \hspace{20pt} \text{s.t.}
& & \hspace{-22pt} \left[\bF_\mathrm{RF }\right]_{:,r} \in \cF_\mathrm{RF}, \ \ r=1, ..., N_\mathrm{RF} \\
&&& \hspace{-20pt} \bF_\mathrm{RF} \bF[k] \in \cU_{N_\mathrm{BS} \times N_\mathrm{RF}}, \ \ k=1, 2, ..., K.
\label{eq:Opt_Feedback}
\end{aligned}
\end{equation}

One challenge of the hybrid precoding design to solve the optimization problem in \eqref{eq:Opt_Feedback} is the coupling between baseband and RF precoders that arises in the power constraint (the second constraint of \eqref{eq:Opt_Feedback}). In the following proposition, we show that the baseband precoders can be written optimally as a function of the RF precoders. 
\begin{proposition}
	Define the SVD decompositions of the matrices $\bH[k]=\bU[k] \boldsymbol{\Sigma}[k] \bV[k]^*$ and  $\boldsymbol{\Sigma}[k] \bV[k]^* \bF_\mathrm{RF} \left(\bF_\mathrm{RF}^* \bF_\mathrm{RF} \right)^{-\frac{1}{2}}$ $=\overline{\bU}[k] \overline{\boldsymbol{\Sigma}}[k] \overline{\bV}[k]^*$, then the baseband precoders $\left\{\bF[k]\right\}_{k=1}^K$ that solve  \eqref{eq:Opt_Feedback} are given by \label{prop:Opt_UP}
	\begin{equation}
	\bF[k]^\star= \left(\bF_\mathrm{RF}^* \bF_\mathrm{RF} \right)^{-\frac{1}{2}} \left[\overline{\bV}[k]\right]_{:,1:N_\mathrm{S}}, \ \ k=1, 2, ..., K. \label{eq:Opt_BB_UP}
	\end{equation} 
\end{proposition}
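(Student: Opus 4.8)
The plan is to exploit that, once $\bF_\mathrm{RF}$ is fixed, both the objective \eqref{eq:MI} and the power constraint in \eqref{eq:Opt_Feedback} decouple across subcarriers, so I would solve a single-subcarrier problem and suppress the index $k$. First I would absorb the RF precoder into a semi-unitary factor: writing $\bG = \bF_\mathrm{RF}^* \bF_\mathrm{RF}$ and substituting $\widetilde{\bF} = \bG^{1/2}\bF$, the coupling constraint $\bF^*\bG\bF = \bI_{N_\mathrm{S}}$ becomes the plain semi-unitary constraint $\widetilde{\bF}^*\widetilde{\bF} = \bI_{N_\mathrm{S}}$. Under this change of variables $\bF_\mathrm{RF}\bF = \bA\widetilde{\bF}$ with $\bA = \bF_\mathrm{RF}\bG^{-1/2}$, and a direct computation gives $\bA^*\bA = \bI$, so $\bA$ is itself semi-unitary (the orthonormalized RF precoder). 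This step presumes $\bF_\mathrm{RF}$ has full column rank so that $\bG^{-1/2}$ exists, which is implicit in the RF-chain/codebook assumptions.

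Next I would insert the SVD $\bH = \bU\boldsymbol{\Sigma}\bV^*$ and recognize the second SVD in the statement. Indeed $\bH\bA = \bU(\boldsymbol{\Sigma}\bV^*\bF_\mathrm{RF}\bG^{-1/2}) = \bU\,\overline{\bU}\,\overline{\boldsymbol{\Sigma}}\,\overline{\bV}^*$. Because $\bU\overline{\bU}$ is unitary and $|\bI + \bX\bX^*| = |\bI + \bX^*\bX|$ (Sylvester's identity), the per-subcarrier objective collapses to $|\bI_{N_\mathrm{S}} + \tfrac{\rho}{N_\mathrm{S}}\bP^*\,\overline{\boldsymbol{\Sigma}}^*\overline{\boldsymbol{\Sigma}}\,\bP|$ where $\bP = \overline{\bV}^*\widetilde{\bF}$ is again semi-unitary, being a product of a unitary and a semi-unitary matrix. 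The matrix $\overline{\boldsymbol{\Sigma}}^*\overline{\boldsymbol{\Sigma}}$ is diagonal with the squared singular values $\overline{\sigma}_1^2 \ge \overline{\sigma}_2^2 \ge \cdots$ on its diagonal, so the whole problem reduces to choosing an orthonormal frame $\bP$ that captures as much of this diagonal as possible.

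The remaining step, which I expect to be the crux, is to show the optimal $\bP$ selects the $N_\mathrm{S}$ largest diagonal entries, i.e.\ $\bP^\star = [\,\bI_{N_\mathrm{S}}\,;\,\b0\,]$. I would argue this by the Poincar\'e separation theorem: for any semi-unitary $\bP$ the eigenvalues $\mu_i$ of the compression $\bP^*\overline{\boldsymbol{\Sigma}}^*\overline{\boldsymbol{\Sigma}}\bP$ interlace those of $\overline{\boldsymbol{\Sigma}}^*\overline{\boldsymbol{\Sigma}}$, giving $\mu_i \le \overline{\sigma}_i^2$, whence $|\bI_{N_\mathrm{S}} + \tfrac{\rho}{N_\mathrm{S}}\bP^*\overline{\boldsymbol{\Sigma}}^*\overline{\boldsymbol{\Sigma}}\bP| = \prod_{i=1}^{N_\mathrm{S}}(1 + \tfrac{\rho}{N_\mathrm{S}}\mu_i) \le \prod_{i=1}^{N_\mathrm{S}}(1 + \tfrac{\rho}{N_\mathrm{S}}\overline{\sigma}_i^2)$, with equality for the top-$N_\mathrm{S}$ selection. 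I would emphasize that this is a pure subspace-selection argument with no water-filling, precisely because the semi-unitary constraint fixes equal power on $N_\mathrm{S}$ orthonormal streams. Finally I would back-substitute $\widetilde{\bF}^\star = \overline{\bV}\bP^\star = [\overline{\bV}]_{:,1:N_\mathrm{S}}$ and $\bF^\star = \bG^{-1/2}\widetilde{\bF}^\star$, which is exactly \eqref{eq:Opt_BB_UP}.
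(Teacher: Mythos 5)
Your proposal is correct, and it follows exactly the route the paper itself indicates: the paper omits the proof, stating only that it ``follows using change of variables'' (deferring details to the journal version), and your substitution $\widetilde{\bF} = (\bF_\mathrm{RF}^*\bF_\mathrm{RF})^{1/2}\bF[k]$, which turns the coupled power constraint into a plain semi-unitary one and reduces the problem to optimal semi-unitary precoding for the effective channel $\bH[k]\bF_\mathrm{RF}(\bF_\mathrm{RF}^*\bF_\mathrm{RF})^{-1/2}$, is precisely that change of variables. Your completion of the omitted details---Sylvester's identity plus Poincar\'e separation to show the top-$N_\mathrm{S}$ right singular vectors are optimal, with the full-column-rank caveat on $\bF_\mathrm{RF}$ made explicit---is sound.
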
 
\begin{proof}
	The proof follows using change of variables. It is omitted due to space limitation, but available in the journal version \cite{Alkhateeb2015a}.
\end{proof}

Given proposition \ref{prop:Opt_UP}, the optimal hybrid precoding based mutual information can be given by making an exhaustive search over only the RF precoding codebook. To avoid this search. we propose efficient greedy hybrid precoding algorithms in the following sections. 

\section{Greedy Hybrid Precoding} \label{sec:Greedy}
 A natural greedy approach to construct the hybrid precoder is to iteratively select the $N_\mathrm{RF}$ RF beamforming vectors from the codebook $\cF_\mathrm{RF}$ to maximize the mutual information. In this paper, we call this  the direct greedy hybrid precoding (DG-HP) algorithm.  Let the $N_\mathrm{BS}\times (i-1)$ matrix $\bF_\mathrm{RF}^{(i-1)}$ denote the RF precoding matrix at the end of the $(i-1)$th iteration. Then by leveraging the optimal baseband precoder structure in \eqref{eq:Opt_BB_UP}, the objective of the $i$th iteration is to select $\bff^\mathrm{RF}_{n} \in \cF_\mathrm{RF}$ that solves
\begin{align}
& \cI_\mathrm{HP}^{(i)}=\underset{\bff^\mathrm{RF}_n \in \cF_\mathrm{RF}}{\max}\frac{1}{K} \sum_{k=1}^K \sum_{\ell=1}^{i} \log_2\left(1 + \frac{\rho}{N_\mathrm{RF}} \right. \nonumber \\
& \hspace{0pt} \times \left.  \lambda_\ell\left(\bH\left[k\right] \hat{\bF}_\mathrm{RF}^{(i,n)} \left({\hat{\bF}_\mathrm{RF}^{{(i,n)}^*}}\hat{\bF}_\mathrm{RF}^{(i,n)}\right)^{-1} {\hat{\bF}_\mathrm{RF}^{{(i,n)}^*}} \bH\left[k\right]^* \right)\right),
\label{eq:DG_HP}
\end{align}
with $\hat{\bF}_\mathrm{RF}^{(i,n)}=\left[\bF_\mathrm{RF}^{(i-1)}, \bff^\mathrm{RF}_n\right]$. The best vector $\bff^\mathrm{RF}_{n^\star}$ will be then added to the RF precoding matrix to form $\bF_\mathrm{RF}^{(i)}=\left[\bF_\mathrm{RF}^{(i-1)},  \bff^\mathrm{RF}_{n^\star}\right]$. The achievable mutual information with this algorithm is then $\cI_\mathrm{HP}^\mathrm{DG-HP}=\cI_\mathrm{HP}^{(N_\mathrm{RF})}$. The main limitation of this algorithm is that it still requires an exhaustive search over $\cF_\mathrm{RF}$ and eigenvalues calculation in each iteration. In the next section, we will make a first step towards developing a low-complexity algorithm that has a similar (or very close) performance to this DG-HP algorithm. 
\section{Gram-Schmidt Greedy Hybrid Precoding} \label{sec:GS_Main}
In hybrid analog/digital precoding architectures, the effective channel seen at the baseband is through the RF precoders lens. This gives the intuition that it is better for the RF beamforming vectors to be orthogonal (or close to orthogonal), as this physically means that the effective channel will have a better coverage over the dominant subspaces belonging to the actual channel matrix. This intuition is also confirmed by the structure of the optimal baseband precoder discussed in Proposition \ref{prop:Opt_UP}, as the overall matrix $\bF_\mathrm{RF}\left(\bF_\mathrm{RF}^* \bF_\mathrm{RF}\right)^{-\frac{1}{2}}$ has a semi-unitary structure. This note means that in each iteration $i$ of the greedy hybrid precoding algorithm in \eqref{eq:DG_HP} with a selected codeword $\bff^\mathrm{RF}_{n^\star}$, the additional mutual information gain over the previous iterations is due to the contribution of the component of $\bff^\mathrm{RF}_{n^\star}$ that is orthogonal on the existing RF precoding matrix $\bF_\mathrm{RF}^{(i-1)}$. This is similar to the greedy user scheduling in MIMO broadcast channels based on the orthogonal channel components \cite{Yoo2006}, but in a different context. Based on that, we modify the DG-HP algorithm by adding a Gram-Schmidt orthogonalization step in each iteration $i$ to project the candidate beamforming codewords on the orthogonal complement of the subspace spanned by the selected codewords in $\bF_\mathrm{RF}^{(i-1)}$. This can be simply done by multiplying the candidate vectors by the projection matrix ${\bP^{(i-1)}}^{\perp}=\left(\bI_i-\bF_\mathrm{RF}^{(i-1)}\left(\bF_\mathrm{RF}^{(i-1)^*} \bF_\mathrm{RF}^{(i-1)}\right)^{-1} \bF_\mathrm{RF}^{(i-1)^*}\right)$. Given the optimal precoder design in \eqref{eq:Opt_BB_UP}, the mutual information at the $i$th iteration of the modified Gram-Schmidt hybrid precoding (GS-HP) algorithm can be written as
\begin{small}
\begin{align}
\overline{\cI}_\mathrm{HP}^{(i)} &=\underset{\bff^\mathrm{RF}_n \in {\cF}_\mathrm{RF}}{\max}\frac{1}{K} \sum_{k=1}^K \sum_{\ell=1}^{i} \log_2\left(1 + \frac{\rho}{N_\mathrm{RF}} \lambda_\ell\left(\bH\left[k\right] \overline{\bF}_\mathrm{RF}^{(i,n)}\right. \right. \nonumber \\ & \hspace{40pt} \times \left. \left. \left({\overline{\bF}_\mathrm{RF}^{(i,n)}}^* \overline{\bF}_\mathrm{RF}^{(i,n)}\right)^{-1} {\overline{\bF}_\mathrm{RF}^{(i,n)}}^* \bH\left[k\right]^* \right)\right),\label{eq:GS_HP}\\
&\stackrel{(a)}{=} \underset{\bff^\mathrm{RF}_n \in {\cF}_\mathrm{RF}}{\max}\frac{1}{K} \sum_{k=1}^K \sum_{\ell=1}^{i} \log_2\left(1 + \frac{\rho}{N_\mathrm{RF}} \lambda_\ell\left( \bT^{(i-1)} \right. \right. \nonumber \\
& \hspace{20pt} \left. \left. + \bH[k] {\bP^{(i-1)}}^{\perp} \bff^\mathrm{RF}_n  {\bff^\mathrm{RF}_n}^* {{\bP^{(i-1)}}^{\perp}}^* \bH^*[k] \right)\right),
\label{eq:GS_HP2}
\end{align}
\end{small}
with $\bT^{(i-1)} \hspace{-2pt} = \hspace{-2pt} \bH\left[k\right] \bF_\mathrm{RF}^{(i-1)}\hspace{-3pt}\left({\bF_\mathrm{RF}^{(i-1)}}^* \bF_\mathrm{RF}^{(i-1)}\right)^{\hspace{-3pt}-1} {\bF_\mathrm{RF}^{(i-1)}}^* \bH\left[k\right]^*$, and $\overline{\bF}_\mathrm{RF}^{(i,n)}=\left[\bF_\mathrm{RF}^{(i-1)}, {\bP^{(i-1)}}^{\perp} \bff^\mathrm{RF}_n\right]$. Note that $\bT^{(i-1)}$ is a constant matrix at iteration $i$, and (a) follows from the Gram-Schmidt orthogonalization which allows the matrix $\overline{\bF}_\mathrm{RF}^{(i,n)}\left({\overline{\bF}_\mathrm{RF}^{(i,n)}}^* \overline{\bF}_\mathrm{RF}^{(i,n)}\right)^{-\frac{1}{2}}$ to be written as $\left[\bF_\mathrm{RF}^{(i-1)}\left({\bF_\mathrm{RF}^{(i-1)}}^* \bF_\mathrm{RF}^{(i-1)}\right)^{-\frac{1}{2}},  {\bP^{(i-1)}}^{\perp} \bff_n^\mathrm{RF} \right]$. Hence, the eigenvalues calculation in \eqref{eq:GS_HP2} can be calculated as a rank-1 update of the previous iteration eigenvalues, which reduces the overall complexity \cite{Br2006}. The best vector $\bff^\mathrm{RF}_{n^\star}$ will be then added to the RF precoding matrix to form $\bF_\mathrm{RF}^{(i)}=\left[\bF_\mathrm{RF}^{(i-1)},  \bff^\mathrm{RF}_{n^\star}\right]$. At the end of the $N_\mathrm{RF}$ iterations, we get $\cI_\mathrm{HP}^\mathrm{GS-HP}=\overline{\cI}_\mathrm{HP}^{(N_\mathrm{RF})}$.  In the following proposition, we prove that this Gram-Schmidt hybrid precoding algorithm is exactly equivalent to the DG-HP algorithm.

\begin{proposition}
	The achieved mutual information of the direct greedy hybrid precoding algorithm in \eqref{eq:DG_HP} and the Gram-Schmidt hybrid precoding algorithm in \eqref{eq:GS_HP} are exactly equal, i.e., $\cI_\mathrm{HP}^\mathrm{DG-HP}=\cI_\mathrm{HP}^\mathrm{GS-HP}$.\label{prop:GS}
\end{proposition}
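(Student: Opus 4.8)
The plan is to show that the per-iteration selection objectives in \eqref{eq:DG_HP} and \eqref{eq:GS_HP} agree codeword-by-codeword, and then to propagate this equality across iterations by induction on the iteration index $i$. The enabling observation is that both objectives depend on the candidate RF matrix $\bA$ only through the matrix $\bA\left(\bA^*\bA\right)^{-1}\bA^*$, which for a full-column-rank $\bA$ is exactly the orthogonal projector onto the column space $\mathrm{range}(\bA)$. Since the common summand is $\sum_{\ell=1}^{i}\log_2\left(1+\frac{\rho}{N_\mathrm{RF}}\lambda_\ell\left(\bH[k]\,\bA\left(\bA^*\bA\right)^{-1}\bA^*\,\bH[k]^*\right)\right)$, evaluated at $\bA=\hat{\bF}_\mathrm{RF}^{(i,n)}$ for \eqref{eq:DG_HP} and at $\bA=\overline{\bF}_\mathrm{RF}^{(i,n)}$ for \eqref{eq:GS_HP}, each objective is a function of $\mathrm{range}(\bA)$ alone.

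Next I would establish $\mathrm{range}\big(\hat{\bF}_\mathrm{RF}^{(i,n)}\big)=\mathrm{range}\big(\overline{\bF}_\mathrm{RF}^{(i,n)}\big)$. Writing ${\bP^{(i-1)}}^{\perp}\bff^\mathrm{RF}_n=\bff^\mathrm{RF}_n-\bP^{(i-1)}\bff^\mathrm{RF}_n$ with $\bP^{(i-1)}=\bF_\mathrm{RF}^{(i-1)}\left({\bF_\mathrm{RF}^{(i-1)}}^*\bF_\mathrm{RF}^{(i-1)}\right)^{-1}{\bF_\mathrm{RF}^{(i-1)}}^*$, the correction $\bP^{(i-1)}\bff^\mathrm{RF}_n$ lies in $\mathrm{range}\big(\bF_\mathrm{RF}^{(i-1)}\big)$, hence inside the span of both augmented matrices; therefore the two matrices differ only by adding a vector already in the shared subspace and so span the same column space. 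Combined with the first step, this shows that for every $\bff^\mathrm{RF}_n\in\cF_\mathrm{RF}$ the objective of \eqref{eq:DG_HP} equals that of \eqref{eq:GS_HP}, as long as both are built on the same $\bF_\mathrm{RF}^{(i-1)}$.

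I would then close by induction. Both algorithms start from the empty matrix, and if they enter iteration $i$ with the same $\bF_\mathrm{RF}^{(i-1)}$, the candidate objectives coincide term-by-term, so the maximizing codeword $\bff^\mathrm{RF}_{n^\star}$ and the maximal value are the same; since each scheme then appends the \emph{raw} $\bff^\mathrm{RF}_{n^\star}$ to form $\bF_\mathrm{RF}^{(i)}=\big[\bF_\mathrm{RF}^{(i-1)},\bff^\mathrm{RF}_{n^\star}\big]$, the two matrices remain identical. Taking $i=N_\mathrm{RF}$ gives $\cI_\mathrm{HP}^\mathrm{DG-HP}=\cI_\mathrm{HP}^\mathrm{GS-HP}$. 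The step requiring the most care---and the main obstacle---is the degeneracy bookkeeping: when $\bff^\mathrm{RF}_n\in\mathrm{range}(\bF_\mathrm{RF}^{(i-1)})$ the Gram-Schmidt component vanishes and $\left({\overline{\bF}_\mathrm{RF}^{(i,n)}}^*\overline{\bF}_\mathrm{RF}^{(i,n)}\right)^{-1}$ ceases to exist, so I would either restrict the maximization to codewords giving full-rank augmentations (the rest yielding no gain and thus never selected) or read the objective off the range projector, which remains well defined and equal for both matrices in every case. A minor secondary point is that if several codewords tie for the maximum, a common tie-breaking rule keeps the matrices literally identical; regardless, the attained mutual information \emph{values} coincide, which is exactly what the statement claims.
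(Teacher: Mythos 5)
Your proposal is correct and follows essentially the same route as the paper: induction over iterations, with the key step being that both objectives depend on the augmented RF matrix only through the orthogonal projector onto its column space, which is unchanged by the Gram--Schmidt step (the paper formalizes this via the factorization $\hat{\bF}_\mathrm{RF}^{(i,n)}=\overline{\bF}_\mathrm{RF}^{(i,n)}\bE_C$ with an invertible elementary column matrix, which is just your range-equality argument in matrix form). Your explicit handling of the degenerate case where a candidate lies in $\mathrm{range}\bigl(\bF_\mathrm{RF}^{(i-1)}\bigr)$ and of ties is a welcome refinement that the paper's proof silently omits.
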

\begin{proof}
	See  \sref{app:GS}.
\end{proof}
\section{Approximate Gram-Schmidt Based Greedy Hybrid Precoding} \label{subsec:Approx_GS}

\begin{algorithm} [!t]                     
	\caption{Approximate Gram-Schmidt Greedy Hybrid Precoding}          
	\label{alg:GS_HP}                           
	\begin{algorithmic} 
		\State \textbf{Initialization}
		\State 1) \begin{varwidth}[t]{\linewidth}  Construct $\boldsymbol{\Pi} = \tilde{\boldsymbol{\Sigma}}_\mathrm{\bH} \tilde{\bV}_\mathrm{\bH}$, with $\tilde{\boldsymbol{\Sigma}}_\mathrm{\bH}=\left[\tilde{\boldsymbol{\Sigma}}_\mathrm{1}, ..., \tilde{\boldsymbol{\Sigma}}_\mathrm{K}\right]$ and $\tilde{\bV}_\mathrm{\bH}=\left[\tilde{\bV}_\mathrm{1}, ..., \tilde{\bV}_\mathrm{K}\right]$. Set $\bF_\mathrm{RF}=$ Empty Matrix. Set $\bA_\mathrm{CB}=\left[{\bff}_1^\mathrm{RF}, ...,{\bff}^\mathrm{RF}_{N_\mathrm{CB}^\mathrm{v}} \right]$, where ${\bff}^\mathrm{RF}_n, n=1, ..., {N_\mathrm{CB}^\mathrm{v}}$ are the codewords in ${\cF}_\mathrm{RF}$.
		\end{varwidth}
		\State \textbf{RF Precoder Design}
		\State 2) \text{For}{ $i, i = 1, ..., N_\mathrm{RF}$}
		\State  $\hspace{20pt}$ a) $ \boldsymbol\Psi = \boldsymbol\Pi^*{\bA}_\mathrm{CB}$
		\State  $\hspace{20pt}$ b) $n^\star=\arg\max_{n=1,2,..N_\mathrm{CB}^\mathrm{v}} \left\|\left[\boldsymbol\Psi\right]_{:,n}\right\|_2$.
		\State  $\hspace{20pt}$ c) $\bF_\mathrm{RF}^{(i)} = \left[\bF_\mathrm{RF}^{(i-1)} \bff_{n^\star}^\mathrm{RF}\right]$
		\State $\hspace{20pt}$ d) $\boldsymbol\Pi=\boldsymbol\Pi \left(\bI_i-\bF_\mathrm{RF}^{(i)}\left(\bF_\mathrm{RF}^{(i)^*} \bF_\mathrm{RF}^{(i)}\right)^{-1}\bF_\mathrm{RF}^{(i)^*}\right)$
		\State \textbf{Digital Precoder Design}
		\State 3) \begin{varwidth}[t]{\linewidth} $\bF[k]= \bF_\mathrm{RF}^{(N_\mathrm{RF})} \left(\bF_\mathrm{RF}^{(N_\mathrm{RF})^*} \bF_\mathrm{RF}^{(N_\mathrm{RF})} \right)^{-\frac{1}{2}} \left[\overline{\bV}[k]\right]_{:,1:N_\mathrm{S}}, k=1, ..., K$,  with $\overline{\bV}[k]$ defined in \eqref{eq:Opt_BB_UP}. \end{varwidth}
	\end{algorithmic}
\end{algorithm}

The main advantage of the Gram-Schmidt hybrid precoding design in \sref{sec:GS_Main} is that it leads to a near-optimal low-complexity design of the frequency selective hybrid precoding as will be discussed in this section. Given the optimal baseband precoding solution in \eqref{eq:Opt_BB_UP}, the mutual information at the $i$th iteration in \eqref{eq:GS_HP} can be written as
\begin{small}
\begin{align}
\overline{\cI}_\mathrm{HP}^{(i)}& = \underset{\bff^\mathrm{RF}_n \in {\cF}_\mathrm{RF}}{\max}\frac{1}{K} \sum_{k=1}^K \sum_{\ell=1}^{i} \log_2\left(1 + \frac{\rho}{N_\mathrm{RF}} \lambda_\ell\left(\bH\left[k\right] \overline{\bF}_\mathrm{RF}^{(i,n)} \right. \right. \nonumber \\
& \hspace{40pt} \times \left. \left. \left({\overline{\bF}_\mathrm{RF}^{(i,n)}}^* \overline{\bF}_\mathrm{RF}^{(i,n)}\right)^{-1} {\overline{\bF}_\mathrm{RF}^{(i,n)}}^* \bH\left[k\right]^* \right)\right),\\
&\hspace{-20pt}\stackrel{(a)}{\geq} \underset{\bff^\mathrm{RF}_n \in {\cF}_\mathrm{RF}}{\max} \frac{1}{K} \sum_{k=1}^K \sum_{\ell=1}^{i} \log_2  \left( 1+\frac{\rho}{N_\mathrm{S}}  \lambda_\ell\left(\tilde{\boldsymbol{\Sigma}}[k] \tilde{\bV}^*[k] \right. \right. \nonumber \\
&\hspace{-20pt}  \times \left. \left.  \overline{\bF}_\mathrm{RF}^{(i,n)}\left({\overline{\bF}_\mathrm{RF}^{(i,n)}}^* \overline{\bF}_\mathrm{RF}^{(i,n)}\right)^{-1} {\overline{\bF}_\mathrm{RF}^{(i,n)}}^* \tilde{\bV}[k] \tilde{\boldsymbol{\Sigma}}^*[k] \right)\right), \\
&\hspace{-20pt} \stackrel{(b)}{\approx} \frac{1}{K} \sum_{k=1}^K \left(\log_2\left|\bI +\frac{\rho}{N_\mathrm{S}}  \tilde{\boldsymbol{\Sigma}}[k]^2 \right| - \text{tr} \left(\tilde{\boldsymbol{\Sigma}}[k]\right) \right)  \nonumber\\  
&\hspace{-20pt} + \underset{\bff^\mathrm{RF}_n \in {\cF}_\mathrm{RF}}{\max} \frac{1}{K} \sum_{k=1}^K \left\|\tilde{\boldsymbol{\Sigma}}[k] \tilde{\bV}[k]^* \overline{\bF}_\mathrm{RF}^{(i,n)} \hspace{-3pt} \left({\overline{\bF}_\mathrm{RF}^{(i,n)}}^* \overline{\bF}_\mathrm{RF}^{(i,n)}\right)^{\hspace{-3pt}- \frac{1}{2}} \right\|_F^2
\end{align}
\end{small}%
where (a) is by considering only the first $N_\mathrm{S}$ dominant singular values of $\bH[k]$, $\tilde{\boldsymbol{\Sigma}}[k]=\left[{\boldsymbol{\Sigma}}[k]\right]_{:,1:N_\mathrm{S}}$, $\tilde{\bV}[k]=\left[{\bV}[k]\right]_{:,1:N_\mathrm{S}}$, and (b) follows from using the large mmWave MIMO approximations used in \cite{ElAyach2014}. The objective of the $i$th iteration is then to solve
\begin{small}\begin{align}
\bff^\mathrm{RF}_{n^\star}&=\underset{{\bff}^\mathrm{RF}_n \in {\cF}_\mathrm{RF}}{\arg\max}\frac{1}{K} \sum_{k=1}^K \left\|\tilde{\boldsymbol{\Sigma}}[k] \tilde{\bV}[k]^* \overline{\bF}_\mathrm{RF}^{(i,n)} \hspace{-3pt} \left({\overline{\bF}_\mathrm{RF}^{(i,n)}}^* \overline{\bF}_\mathrm{RF}^{(i,n)}\right)^{\hspace{-3pt}-\frac{1}{2}} \right\|_F^2 \\
& =\underset{{\bff}^\mathrm{RF}_n \in {\cF}_\mathrm{RF}}{\arg\max}\left\|\tilde{\boldsymbol{\Sigma}}_\mathrm{\bH} \tilde{\bV}_\mathrm{\bH}^* \bF^{(i-1)}_\mathrm{RF} \left(\bF^{(i-1)^*}_\mathrm{RF} \bF^{(i-1)}_\mathrm{RF}\right)^{-\frac{1}{2}} \right\|_F^2, \\
&\stackrel{(a)}{=} \underset{{\bff}^\mathrm{RF}_n \in {\cF}_\mathrm{RF}}{\arg\max}\left\|\tilde{\boldsymbol{\Sigma}}_\mathrm{\bH} \tilde{\bV}_\mathrm{\bH}^* {\bP^{(i-1)}}^\perp {\bff}_n^\mathrm{RF}\right\|_2^2,  \label{eq:Opt_GS_Simple}
\end{align}\end{small}
\hspace{-5pt} where $\tilde{\boldsymbol{\Sigma}}_\mathrm{\bH}=\left[\tilde{\boldsymbol{\Sigma}}[1], ..., \tilde{\boldsymbol{\Sigma}}[K]\right]$, $\tilde{\bV}_\mathrm{\bH}=\left[\tilde{\bV}[1], ..., \tilde{\bV}[K]\right]$, and (a) is a result of the Gram-Schmidt processing as described in \sref{sec:GS_Main}. The problem in \eqref{eq:Opt_GS_Simple} is simple to solve with just a maximum projection step. We call this algorithm the approximate Gram-Schmidt hybrid precoding (Approximate GS-HP) algorithm. As shown in  Algorithm \ref{alg:GS_HP}, the developed algorithm sequentially build the RF and baseband precoding matrices in two separate stages. First, the RF beamforming vectors are iteratively selected to solve \eqref{eq:Opt_GS_Simple}. Then, the baseband precoder is optimally designed according to \eqref{eq:Opt_BB_UP}. Despite its sequential design of the RF and baseband precoders, which reduces the complexity when compared with prior solutions that mostly depend on the joint design of the baseband and RF precoding matrices \cite{ElAyach2014,Alkhateeb2014}, Algorithm \ref{alg:GS_HP} achieves a significant gain over prior solutions, and gives a very close performance to the optimal solution given by exhaustive search, as will be shown in \sref{sec:Results}.   

\section{Simulation Results} \label{sec:Results}
In this section, we evaluate the performance of the proposed algorithm using numerical simulations. We adopt a wideband mmWave channel model that consists of $L=6$ clusters. The center AoAs/AoDs of the $L$ clusters $\theta_\ell, \phi_\ell$ are assumed to be uniformly distributed in $[0, 2 \pi)$. Each cluster has  $R_\ell=5$ rays  with Laplacian distributed AoAs/AoDs \cite{Forenza2007,ElAyach2014}, and angle spread of $10^\mathrm{o}$. The number of system subcarriers $K$ equals $512$, and the cyclic prefix length is $D=128$, which is similar to 802.11ad \cite{11ad}. The paths delay is uniformly distributed in $[0, D T_\mathrm{s}]$. Both the BS and MS have ULAs with $N_\mathrm{RF}=3$. 
\begin{figure}[t]
	\centerline{
		\includegraphics[width=1\columnwidth]{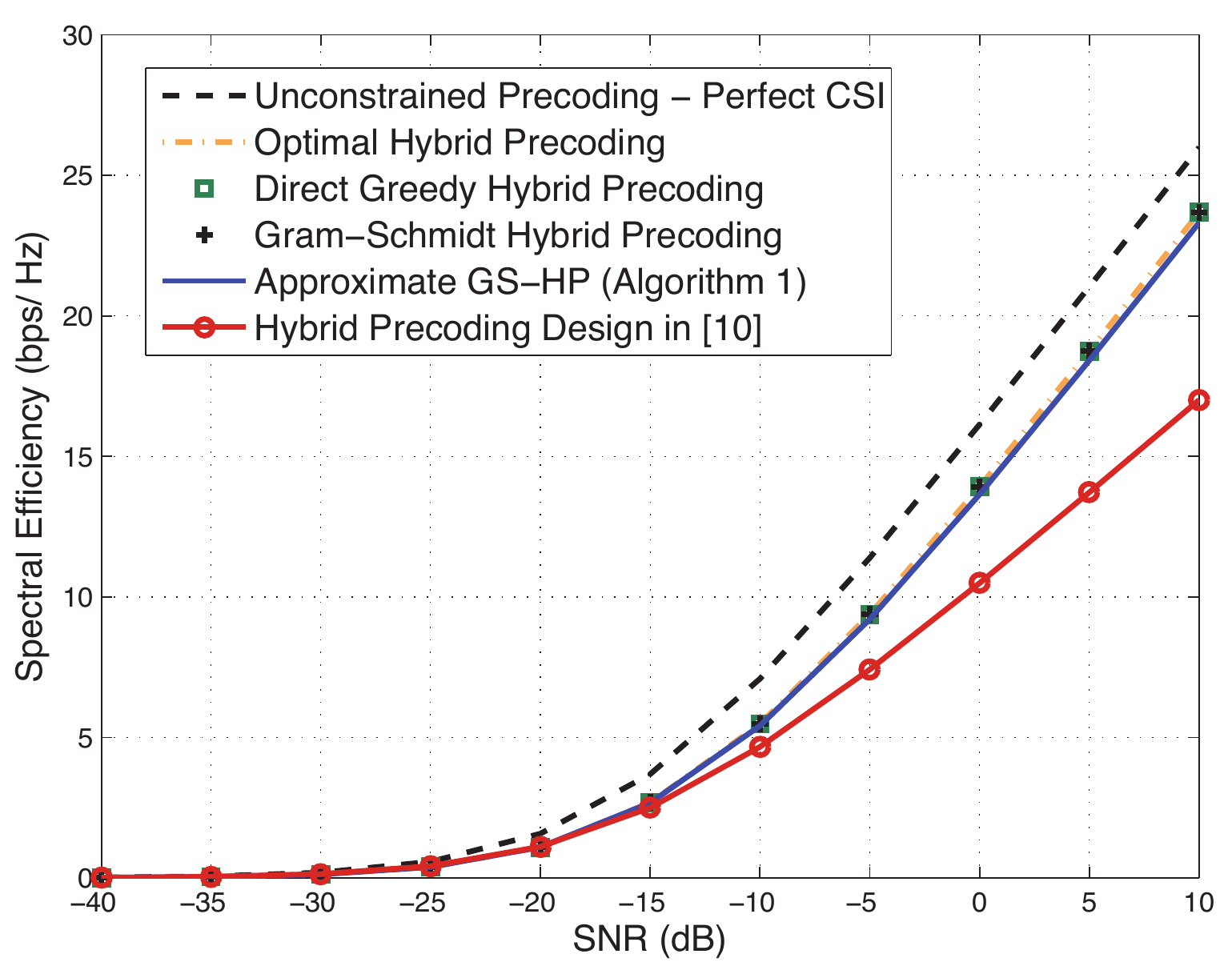}
	}
	\caption{The performance of the approximate Gram-Schmidt hybrid precoding design in Algorithm \ref{alg:GS_HP} compared with the optimal hybrid precoding solution, the unconstrained SVD solution, and the prior work in \cite{ElAyach2014}. The system has  $N_\mathrm{BS}=32$ antennas, $N_\mathrm{MS}=16$ antennas, and $N_\mathrm{S}=N_\mathrm{RF}=3$.}
	\label{fig:Fig4}
\end{figure}
\vspace{-10pt}

In \figref{fig:Fig4}, we validate the result in Proposition \ref{prop:GS}, in addition to evaluating the approximate Gram-Schmidt based hybrid precoding algorithm. The spectral efficiencies achieved by these greedy algorithms are compared with the optimal hybrid precoding design given by the exhaustive search over the RF codebooks. The rates are also compared with the prior solution in \cite{ElAyach2014}. For a fair comparison, we assume that each RF beamforming vector is selected from a beamsteering codebook with a size $N_\mathrm{CB}=64$. First, \figref{fig:Fig4} shows that the direct greedy and Gram-Schmidt based hybrid precoding algorithms achieve exactly the same performance which verifies Proposition \ref{prop:GS}. Despite its low-complexity, the developed approximate Gram-Schmidt hybrid precoding design in Algorithm \ref{alg:GS_HP} achieves very close performance to the exhaustive-search based optimal solution. We emphasize here that any hybrid precoding design can not perform better that the shown optimal hybrid precoding solution with the considered RF codebook, which confirms the near-optimal result of the proposed algorithm. This is also clear from the considerable gain obtained by the proposed algorithm compared with the prior solution in \cite{ElAyach2014}. Also, it is worth mentioning that the developed hybrid precoding algorithms in this paper can be applied to any large MIMO system (not specifically mmWave systems). 
\section{Conclusion}\label{sec:Conc}
In this paper, we investigated hybrid precoding design for wideband mmWave systems. First, we derived the optimal hybrid precoding design that maximizes the achievable mutual information for any given RF codebook, and showed that the optimal baseband structure can be decomposed into an RF precoder dependent matrix and a unitary matrix. Second, we developed a novel greedy hybrid precoding algorithm based on Gram-Schmidt orthogonalization. Thanks to this Gram-Schmidt orthogonalization, we showed that only sequential design of the RF and baseband precoders is required to achieve the same performance of more sophisticated algorithms that requires a joint design of the RF and baseband precoders in each step. Simulation results illustrated that the proposed precoding algorithms improve over prior work and stay within a small gap from the unconstrained perfect channel knowledge solutions. 
\section{Proof of Proposition 2}\label{app:GS}
\begin{proof}
To prove that $\cI_\mathrm{HP}^\mathrm{GS-HP}=\cI_\mathrm{HP}^\mathrm{DG-HP}$, it is sufficient to prove that $\bF_\mathrm{RF}^{(N_\mathrm{RF})}$ of the GS-HP and DG-HP algorithms are equal. To do that, we will show that both the algorithms choose the same RF beamforming vector in each iteration, i.e., $\bF_\mathrm{RF}^{(i)}$ is equal for $i=1,...,N_\mathrm{RF}$. This can be proved using  mathematical induction as follows. At the first iteration, the two algorithms do the exhaustive search over the same codebook $\cF_\mathrm{RF}$, and consequently select the same beamforming vectors. Now, suppose that the two algorithms reach the same RF precoding matrix $\bF_\mathrm{RF}^{(i-1)}$ at iteration $i-1$, we need to prove that they both select the same RF beamforming vector at iteration $i$, i.e., we need to prove that both \eqref{eq:DG_HP} and \eqref{eq:GS_HP} choose beamforming vectors with the same index. To prove that, it is enough to show that the contributions of the $n$th beamforming vector $\bff_n^\mathrm{RF}$ from $\cF_\mathrm{RF}^{\mathrm{v}}$ in \eqref{eq:DG_HP} and \eqref{eq:GS_HP} are equal.Given the optimal baseband precoder in \eqref{eq:Opt_BB_UP}, and denoting the SVD of $\hat{\bF}_\mathrm{RF}^{(i,n)}$ as  $\hat{\bF}_\mathrm{RF}^{(i,n)}=\hat{\bU}_\mathrm{RF}^{(i,n)} \hat{\boldsymbol{\Sigma}}_\mathrm{RF}^{(i,n)}\hat{\bV}^{(i,n)^*}_\mathrm{RF}$, equation \eqref{eq:DG_HP} can be written as
\begin{small}
\begin{align}
& \sum_{\ell=1}^{i} \log_2  \left( 1+\frac{\rho}{N_\mathrm{S}}  \lambda_\ell\left(\bH[k] \hat{\bF}^{(i,n)}_\mathrm{RF} \left(\hat{\bF}^{(i,n)^*}_\mathrm{RF} \hat{\bF}^{(i,n)}_\mathrm{RF}\right)^{-1}  \right. \right. \nonumber \\
& \hspace{120pt} \times \left. \left. \vphantom{\left(\hat{\bF}^{(i,n)^*}_\mathrm{RF} \hat{\bF}^{(i,n)}_\mathrm{RF}\right)^{-1} } \hat{\bF}^{(i,n)^*}_\mathrm{RF}  \bH[k]^* \right)\right), \label{eq:GS_equal}\\
& = \sum_{\ell=1}^{i} \log_2  \left( 1+\frac{\rho}{N_\mathrm{S}}  \lambda_\ell\left(\bH[k] \hat{\bU}_\mathrm{RF}^{(i,n)} \hat{\bU}_\mathrm{RF}^{(i,n)^*} \bH[k]^* \right)\right).
\end{align}
\end{small}

\noindent Equation \eqref{eq:GS_HP} can be similarly written, but with $\hat{\bU}_\mathrm{RF}^{(i,n)}$ replaced by $\overline{\bU}_\mathrm{RF}^{(i,n)}$ where $\overline{\bF}_\mathrm{RF}^{(i,n)}=\overline{\bU}_\mathrm{RF}^{(i,n)} \overline{\boldsymbol{\Sigma}}_\mathrm{RF}^{(i,n)} \overline{\bV}^{(i,n)^*}_\mathrm{RF}$. Hence, we  need to prove that $\hat{\bU}_\mathrm{RF}^{(i,n)} \hat{\bU}_\mathrm{RF}^{(i,n)^*}=\overline{\bU}_\mathrm{RF}^{(i,n)} \overline{\bU}_\mathrm{RF}^{(i,n)^*}$. Let $\overline{\bff}_n=\bP^{(i-1)^\perp} \bff_n$ denote the last column of $\overline{\bF}_\mathrm{RF}^{(i,n)}$. As $\overline{\bff}_n$ is a result of successive Gram-Schmidt operations, we can write $\bff_n=\overline{\bff}_n+\bF_\mathrm{RF}^{(i-1)} \boldsymbol{\alpha}_n$, where $\boldsymbol{\alpha}_n$ is a vector results from the Gram-Schmidt process. Consequently,  $\hat{\bF}_\mathrm{RF}^{(i,n)}$ can be written as $\hat{\bF}_\mathrm{RF}^{(i,n)}=\overline{\bF}_\mathrm{RF}^{(i,n)}\bE_C$, where $\bE_C=\left[\begin{array}{cc} \bI & \boldsymbol{\alpha} \\ \boldsymbol{0}^T & 1\end{array}\right]$ is an elementary column operation matrix. Now, we note that $\hat{\bU}_\mathrm{RF}^{(i,n)} \hat{\bU}_\mathrm{RF}^{(i,n)^*}=\hat{\bF}_\mathrm{RF}^{(i,n)} \hat{\bF}_\mathrm{RF}^{(i,n)^{\dagger}}=\overline{\bF}_\mathrm{RF}^{(i,n)} \bE_C \bE_C^{-1} \overline{\bF}_\mathrm{RF}^{(i,n)^{\dagger}}= \overline{\bU}_\mathrm{RF}^{(i,n)} \overline{\bU}_\mathrm{RF}^{(i,n)^*}$, as $\bE_C$ is an $i \times i$ full-rank matrix.
\end{proof}

\bibliographystyle{IEEEtran}

\end{document}